\newtheorem{theorem}{Theorem}
\title{Automated Verification of Quantum Protocols using {\sc mcmas}}
\author{F. Belardinelli, P. Gonzalez, A. Lomuscio
\email{\{f.belardinelli, pavel.gonzalez09, a.lomuscio\}@imperial.ac.uk}
\institute{Imperial College London\\
London, UK}}
\begin{document}

\maketitle

\begin{abstract}
We present a methodology for the automated verification of quantum protocols
using {\sc mcmas}, a symbolic model checker for multi-agent systems \cite{mcmas}.
The method is based on the logical framework developed by D'Hondt and Panangaden
\cite{raqk} for investigating epistemic and temporal properties, built on the
model for Distributed Measurement-based Quantum Computation (DMC) \cite{dmbqc},
an extension of the Measurement Calculus \cite{tmc} to distributed quantum
systems. We describe the translation map from DMC to interpreted systems, the
typical formalism for reasoning about time and knowledge in multi-agent systems
\cite{fagin}. Then, we introduce {\sc dmc2ispl}, a compiler into the input
language of the {\sc mcmas} model checker \cite{mcmas}. We demonstrate the
technique by verifying the Quantum Teleportation Protocol, and discuss the
performance of the tool.
\end{abstract}

\section{Introduction}
\label{sec:introduction}

Quantum computing has gained prominence in the last decade due to theoretical
advances as well as
applications to
security, information processing, and simulation of quantum mechanical systems
\cite{nielsen}. With this increase of activity, the need for validation of
correctness of quantum algorithms has arisen. Model checking
has shown to be a promising verification technique
~\cite{cgp99}.
However, 
tools and techniques for model checking both temporal and epistemic properties of
quantum systems have not been developed yet. In this paper we aim to bridge this
gap by introducing a methodology for the automated verification of quantum
protocols using {\sc mcmas}~\cite{mcmas}, a symbolic model checker for
multi-agent systems (MAS).

The fundamental question from an epistemic point of view is how to model a flow
of quantum information. Is it meaningful to talk about ``quantum knowledge''? And
if it is, how can we express this concept? Several logics, which can be used for
reasoning about knowledge in the context of distributed quantum computation, have
been recently suggested. One of the first attempts
was based on Quantum Message Passing Environments~\cite{kiqs}. A different
approach, i.e.~Quantum Dynamic-Epistemic Logic~\cite{cafqa, lqp2, dlpoqb}, was
developed to model
the behaviour of quantum systems. A third account~\cite{ckfqcr, raqk, ckfqs} was
built on the Distributed Measurement-based Quantum Computation \cite{dmbqc},
which extends the Measurement Calculus~\cite{tmc}, a formal model for
one-way quantum computations. Among these accounts, the logic based on
Distributed Measurement-based Quantum Computation (DMC) has an underlying
operational semantics similar to the semantics of interpreted systems~\cite{
fagin}.
This makes it suitable for model checking using {\sc mcmas}.
However, interpreted systems (IS) have a Boolean semantics, which requires us to
abstract from the underlying probability distribution. While we
recognise that a full analysis of quantum phenomena requires stochastic
considerations, we believe there are still useful lessons to be learned about
protocols when these are abstracted from.
The point of the paper is partly to explore this hypotheses.

In this paper we describe a translation from DMC to IS. We
also
report on a source-to-source compiler that performs the translation into the
input language of
{\sc mcmas}. The compiler enables the use of {\sc mcmas} to
verify automatically temporal and epistemic properties of quantum protocols
specified in DMC. We verify the Quantum Teleportation protocol~\cite{qtp} against
the properties
stated and informally proved in~\cite{raqk},
and show that one specification does not hold contrary to the paper's claim.

\textbf{Related Work.} Several approaches to model checking quantum
systems have already appeared in the literature. To our knowledge,
the only dedicated
verification tool for quantum protocols is the Quantum Model Checker (QMC)
\cite{qmc}. The model checker supports specifications in quantum
computational temporal logic (QCTL), but quantum operators are
restricted to the 
Clifford group, which is the normalizer of the group of Pauli operators~\cite{nielsen}. Although
it contains many common operators, quantum circuits that involve only Clifford
group operators are not universal. Such circuits can be simulated in polynomial
time on a classical computer; however, this leads to a loss of expressive power.

In the same research line~\cite{ylyf10} a
theoretical framework to model check LTL properties using quantum
automata is proposed, and an algorithm for checking invariants of
quantum systems is presented. Finally, in~\cite{eaa10} the Quantum Key
Distribution (QKD) protocol is verified against specific eavesdropping security
properties. The authors elaborate an \emph{ad hoc} model of the
protocol, that they analyse using PRISM~\cite{prism}.

However, we stress that none of these contributions explicitly deal
with knowledge. So, these approaches do not allow the
verification of the temporal epistemic properties discussed in~\cite{raqk}.

\textbf{Structure.} Organizationally, Section~\ref{sec:preliminaries} gives an
overview of the Distributed Measurement-based Quantum Computation, Interpreted
Systems, and Quantum Epistemic Logic. Section~\ref{sec:mapping} presents a
methodology for translating a protocol specified in DMC into the corresponding
IS. Section~\ref{sec:implementation} describes and evaluates an implementation of
the formal methodology. Section~\ref{sec:conclusion} offers brief conclusions.

\section{Preliminaries}
\label{sec:preliminaries}

We discuss only the issues directly related to the paper and refer the reader to
the relevant references for an in-depth coverage of these topics. We assume
familiarity with the concepts of quantum computation~\cite{nielsen}.

\subsection{Distributed Measurement-based Quantum Computation}
\label{sub:DMC}

At the heart of the Measurement Calculus are measurement patterns~\cite{tmc}. A
\emph{pattern} $\mathcal{P} = (V,I,O,\mathcal{A})$ consists of a
\emph{computation space} $V$, which contains all qubits involved in the execution
of $\mathcal{P}$, a set $I$ of input qubits, a set $O$ of output qubits, and a
finite sequence $\mathcal{A}$ of commands $A_{p} \ldots A_{1}$, which are applied
to qubits in $V$ from right to left. The possible commands are the entanglement
operator $E_{qr}$, the measurement $M^{\alpha}_{q}$, and the corrections $X_{q}$
and $Z_{q}$, where
$q$ and $r$ represent the qubits on which these commands operate, and $\alpha$ is
a measurement angle in $[0, 2\pi]$.

An \emph{agent} \textbf{A}~\cite{dmbqc}, denoted as ${\mathbf A}({\mathbf i},
{\mathbf o}) : Q.\mathcal{E}$, is characterised by its classical input \textbf{i}
and output \textbf{o}, by a set $Q$ of qubits, and by a finite event sequence
$\mathcal{E}$, which consists of patterns and commands for classical (c?$x$,
c!$y$) and quantum (qc?$x$, qc!$q$) communication. A \emph{network} $\mathcal{N}$
of agents~\cite{dmbqc} is defined as a set of concurrently acting agents,
together with the global quantum state $\sigma$, specifically $\mathcal{N} =
{\mathbf A}_1({\mathbf i}_1, {\mathbf o}_1) : Q_1.\mathcal{E}_1 \text{ } | \text{
} \ldots \text{ } | \text{ } {\mathbf A}_m({\mathbf i}_m, {\mathbf o}_m) :
Q_m.\mathcal{E}_m \text{ } \| \text{ } \sigma$, abbreviated as $\mathcal{N} = |_i
\text{ } {\mathbf A}_i({\mathbf i}_i, {\mathbf o}_i) : Q_i.\mathcal{E}_i \text{ }
\| \text{ } \sigma$. The \emph{configuration} $C$
of a network $\mathcal{N}$ at a particular point in time is described by a set of
agents, their classical local states, and the quantum state $\sigma$, formally $C
= \sigma, \Gamma_1, {\mathbf a}_1 \text{ }|\text{ } \Gamma_2, {\mathbf a}_2
\text{ }|\text{ } \ldots \text{ }|\text{ } \Gamma_m, {\mathbf a}_m$, abbreviated
as $C = \sigma, |_i \text{ } \Gamma_i, {\mathbf a}_i$, where $\Gamma_i$
represents the classical state of agent ${\mathbf a}_i$, which is defined as a
partial mapping from classical variables to values. The set
$\mathcal{C}_\mathcal{N}$ contains all configurations that potentially occur
during the execution of the network $\mathcal{N}$.

Operational and denotational semantics for DMC are defined in~\cite{dmbqc};
however, here we are more interested in its small-step semantics. The following
small-step rules for configuration transitions describe how the network evolves
over time. If the quantum state does not change in an evaluation step, the
writing $\sigma \vdash$ precedes the rule. Also, we use a shorthand notation for
agents: ${\mathbf a}_i = {\mathbf A}_i : Q_i.\mathcal{E}_i$, ${\mathbf a}_i.E =
{\mathbf A}_i : Q_i.[\mathcal{E}_i.E]$, ${\mathbf a}^{-q} = {\mathbf A} :
Q\backslash{q}.\mathcal{E}$, and ${\mathbf a}^{+q} = {\mathbf A} : Q \uplus
{q}.\mathcal{E}[q/x]$, where $E$ is some event.

{\small
\begin{equation}
    \frac{\sigma,\mathcal{P}(V, I,O,\mathcal{A})\longrightarrow_\lambda \sigma',
        \Gamma'}{\sigma, \Gamma,{\mathbf A} : I \uplus R.[\mathcal{E}.\mathcal{P}]
        \Longrightarrow_\lambda \sigma', \Gamma \cup \Gamma',{\mathbf A} : O \uplus
        R.\mathcal{E}} \label{eq:r1}
\end{equation}
\begin{equation}
    \frac{\Gamma_2(y) = v}{\sigma \vdash (\Gamma_1, {\mathbf a}_1.\text{c?}x \text{
        }|\text{ } \Gamma_2, {\mathbf a}_2.\text{c!}y \Longrightarrow \Gamma_1[x
        \mapsto v], {\mathbf a}_1 \text{ }|\text{ } \Gamma_2, {\mathbf a}_2)}
        \label{eq:r2} \vspace{5pt}
\end{equation}
\begin{equation}
    \frac{}{\sigma \vdash (\Gamma_1, {\mathbf a}_1.\text{qc?}x \text{ }|\text{ }
        \Gamma_2, {\mathbf a}_2.\text{qc!}q \Longrightarrow \Gamma_1, {\mathbf
        a}^{+q}_1 \text{ }|\text{ } \Gamma_2, {\mathbf a}^{-q}_2)} \label{eq:r3}
\end{equation}
\begin{equation}
    \frac{L \Longrightarrow_\lambda R}{L \text{ }|\text{ }L'
        \Longrightarrow_\lambda R\text{ }|\text{ }L'} \label{eq:r4}
\end{equation}
}

The first rule refers to local operations. Since a pattern's big-step semantics
is given by a probabilistic transition system, described by $\longrightarrow$, a
probability $\lambda$ is introduced here. Also, an agent changes its sort
depending on the pattern's output $O$. The next two rules are for the classical
and the quantum rendez-vous. For the quantum rendez-vous a substitution $q$ for
$x$ in the event sequence of the receiving agent is performed and agents need to
update their qubit sorts. (4) is a metarule, which is required to express that
any of the other rules may fire in the context of a larger system.

\subsection{Interpreted Systems and {\sc mcmas}}
\label{sec:IS}

Interpreted systems 
~\cite{
fagin} are the typical formalism for reasoning about time and knowledge in
multi-agent systems.
In IS
each agent $i$ from a non-empty set $Ag$ of agents is modelled by a set of
local states $L_i$, a set of actions $Act_i$ that she may perform according to
her protocol function $P_i$, and an evolution function $t_i$.
A special agent $E$, representing the environment in which the other agents
operate, is also described by a set of local states $L_E$, a set of actions
$Act_E$, a protocol $P_E$, and an evolution function $t_E$. For every
$j \in Ag \cup \{E\}$, the protocol $P_j$ is defined as a function $P_j : L_j
\rightarrow 2^{Act_j}$, assigning a set of actions to a given local state.
Intuitively, $\alpha_j \in P_j(l_j)$ means that action $\alpha_j$ is enabled in
$l_j$. The evolution function $t_j$ is a transition function returning the target
local state given the current local state and the set of actions performed by all
agents, formally
 $t_j : L_j 
\times Act_1 \times \cdots \times Act_n \times Act_E \rightarrow L_j$ under the
constraint $\alpha_j \in P_j(l_j)$. Agents evolve simultaneously in every state
of the system according to the joint transition function $t$.

The set $Act$ of joint actions is defined as the Cartesian product of all agents'
actions, formally $Act = Act_1 \times \cdots \times Act_n \times Act_E$.
The Cartesian product $S = L_i \times \cdots \times L_n \times L_E$ of the
agents' local states
is the set of all global states of the system. The local state of agent $i$ in
the global state $g \in S$ is denoted as $l_i(g)$.
The description of an interpreted system is concluded by including a set of
atomic propositions $AP = \{p_1,p_2,\ldots\}$ and an evaluation relation $V
\subseteq AP \times S$.  Formally, an interpreted system is defined as a
tuple
 $   IS=\left\langle (L_i,Act_i,P_i,t_i)_{i\in Ag},(L_E,Act_E,P_E,t_E),V
        \right\rangle.\nonumber$

Interpreted systems can be used to 
interpret CTLK, a logic combining the branching-time temporal logic CTL with
epistemic modalities.
The formal language $\mathcal{L}$ is built from 
propositional atoms $p \in AP$ and 
agents $i \in Ag$ as follows:
{\small
\begin{equation}
    \varphi ::= p \mid \lnot \varphi \mid \varphi \lor \varphi \mid EX
    \varphi \mid EG \varphi \mid E \varphi U\psi \mid K_i 
    \nonumber
\end{equation}}
The formulae in $\mathcal{L}$ have the
following intuitive meaning. $EX\varphi$: there is a path where $\varphi$ holds
in the next state; $EG\varphi$: there is a path where $\varphi$ always holds;
$E \varphi U\psi $: there is a path where $\varphi$ holds at least until at some
state $\psi$ holds; $K_i\varphi$: agent $i$ knows $\varphi$. 
The other standard CTL formulae, e.g., $AF\varphi$: for all paths $\varphi$
eventually holds, can be derived from the above. The formal definition of
satisfaction in interpreted systems follows.

In an interpreted system $\mathcal{M}$ the evolution function $t$ determines a
transition relation $\xrightarrow[]{\mathcal{M}}$ on states such that $s
\xrightarrow[]{\mathcal{M}} s'$ iff there is a joint action $\alpha \in Act$ such
that $t(s,\alpha) = s'$.  A {\em path} $\pi$ is an infinite sequence of states
$s_0 \xrightarrow[]{\mathcal{M}} s_1 \xrightarrow[]{\mathcal{M}} \ldots$.
Further, $\pi^n$ denotes the $n$-th state in the sequence, i.e, $s_n$.  Finally,
for each agent $i \in Ag$, we introduce the epistemic equivalence relation
$\sim^{\mathcal{M}}_i$ such that $s \sim^{\mathcal{M}}_i s'$ iff $l_i(s) =
l_i(s')$.

Given the IS $\mathcal{M}$, a state $s$, and a formula $\phi \in \mathcal{L}$,
the satisfaction relation $\vDash$ is defined as follows:
\begin{tabbing}
$(\mathcal{M},s) \vDash p$ \ \ \ \ \ \ \ \ \ \ \ \= iff \ \ \= $V(p,s)$\\
$(\mathcal{M},s) \vDash \lnot \phi$ \> iff \> $(\mathcal{M},s) \not \vDash \phi$\\
$(\mathcal{M},s) \vDash \phi \lor \phi'$ \> iff \> $(\mathcal{M},s) \vDash \phi$ or $(\mathcal{M},s) \vDash \phi'$\\
$(\mathcal{M},s) \vDash E X \phi$ \> iff \> there is a path $\pi$ such that $\pi^0 = s$, and $(\mathcal{M},\pi^1) \vDash \phi$\\
$(\mathcal{M},s) \vDash E G \phi$ \> iff \> there is a path $\pi$ such that $\pi^0 = s$, and for all $n \in \mathbb{N}$, $(\mathcal{M},\pi^n) \vDash \phi$\\
$(\mathcal{M},s) \vDash E \phi U \phi'$ \> iff \> there is a path $\pi$ such that $\pi^0 = s$, for some $n \in \mathbb{N}$, $(\mathcal{M},\pi^n) \vDash \phi'$,\\
\> \> and for all $n'$, $0 \leq n' < n$ implies $(\mathcal{M},\pi^{n'}) \vDash \phi$\\
$(\mathcal{M},s) \vDash K_i \phi$ \> iff \> for all $s' \in S$, $s \sim^{\mathcal{M}}_i s'$ implies $(\mathcal{M},s') \vDash \phi$\\
\end{tabbing}

A formula $\phi \in \mathcal{L}$ is true in an IS $\mathcal{M}$, or $\mathcal{M}
\vDash \phi$, iff for all $s \in S$, $(\mathcal{M},s) \vDash \phi$.

In~\cite{mcmas} the authors present a methodology for the verification of IS
based on model checking~\cite{cgp99} via ordered binary decision diagrams. These
verification techniques have been implemented in the {\sc mcmas} model checker.
The input to the model checker is given as an Interpreted Systems Programming
Language (ISPL) program, which is essentially a machine readable IS.



\subsection{Quantum Epistemic Logic}
\label{sub:QEL}

A formal framework for reasoning about temporal and epistemic properties of
distributed quantum systems was developed in~\cite{raqk} on top of DMC.
The authors argue that quantum knowledge is not a meaningful concept, but it is
of interest to reason about classical knowledge pertaining to a quantum system. 
In this sense, the quantum information possessed by an agent concerns the qubits
she owns, the local operations she applies to these qubits, the non-local
entanglement she shares initially, and possibly prior knowledge of her local
quantum inputs. All this information is contained in her local state $\Gamma_i$
and her event sequence $\mathcal{E}_i$. Given a network $\mathcal{N}$, the
epistemic accessibility relation $\sim^{\mathcal{N}}_i$ for an agent ${\mathbf
A}_i$ is defined in~\cite{raqk} as follows: for all configurations $C = \sigma,
|_i \text{ } \Gamma_i, {\mathbf A}_i : Q_i.\mathcal{E}_i$ and $C' = \sigma', |_i
\text{ } \Gamma'_i, {\mathbf A}_i : Q'_i.\mathcal{E}'_i$ in
$\mathcal{C}_\mathcal{N}$,
$C$ and $C'$ are {\em indistinguishable to agent} ${\mathbf A}_i$, written as $C
\sim^{\mathcal{N}}_i C'$, if $\Gamma_i = \Gamma'_i$ and $\mathcal{E}_i =
\mathcal{E}'_i$. The semantics for the modal operator $K_i$ for the knowledge of
agent ${\mathbf A}_i$ is then defined in the usual way: $(C,\mathcal{N}) \vDash
K_i\varphi$ iff for all $C'$, $C' \sim^{\mathcal{N}}_i C$ implies
$(C',\mathcal{N}) \vDash \varphi$.

We now give the truth conditions for all formulae in $\mathcal{L}$ in a network
$\mathcal{N}$.
The set of atomic propositions $AP = \{x = v, x = y, \mathbf{A}_i \textrm{ has }
q, q_1 \ldots q_n = \mathinner{|{\psi}\rangle}, q_i = q_j \}$ is considered
in~\cite{ckfqcr}. In a configuration $C$ of a network $\mathcal{N}$ the truth
conditions for these atomic propositions are given as follows:
\begin{tabbing}
  $(C, \mathcal{N}) \vDash x = v$ \ \ \ \ \ \ \ \ \ \ \ \ \ \ \= iff \ \ \= there is an agent $i$ such that  $\Gamma_i(x) =  v$\\
  $(C, \mathcal{N}) \vDash x = y$ \> iff \> there are agents $i, j$ such that $\Gamma_i(x) =  \Gamma_i(y)$\\
  $(C, \mathcal{N}) \vDash \mathbf{A}_i \textrm{ has } q$ \> iff  \> $q \in Q_i$\\
  $(C, \mathcal{N}) \vDash q_1 \ldots q_n = \mathinner{|{\psi}\rangle}$ \> iff \>  $q_1 \ldots q_n = \mathinner{|{\psi}\rangle}$\\
  $(C, \mathcal{N}) \vDash q_i = q_j$ \> iff \> there is $\mathinner{|{\psi}\rangle}$ such that $\mathinner{|{\psi}\rangle} = q_i = q_j$
\end{tabbing}

In networks the small-step rules given in Section~\ref{sub:DMC} determine a
transition relation $\xrightarrow{\mathcal{N}}$ such that $C
\xrightarrow{\mathcal{N}} C'$ iff there is a rule that applied to $C$ returns
$C'$.  A {\em path} $\gamma$ is an infinite sequence of configurations
$C_0
\xrightarrow{\mathcal{N}} C_1 \xrightarrow{\mathcal{N}}
\ldots$. Further, $\gamma^n$ denotes the $n$-th state in the sequence, i.e,
$C_n$. Finally, for each agent $\mathbf{A}_i$ in the network, we introduce the
epistemic equivalence relation $\sim^{\mathcal{N}}_i$ such that $C
\sim^{\mathcal{N}}_i C'$ iff $\Gamma_i = \Gamma'_i$ and $\mathcal{E}_i =
\mathcal{E}'_i$.

Given the network $\mathcal{N}$, a configuration $C$, and a formula $\phi \in
\mathcal{L}$, the satisfaction relation $\vDash$ is defined as follows:
\begin{tabbing}
$(C, \mathcal{N}) \vDash p$ \ \ \ \ \ \ \ \ \ \ \ \= iff \ \ \= $C$ satisfies the corresponding condition above for atomic $p \in AP$\\
$(C, \mathcal{N}) \vDash \lnot \phi$ \> iff \> $(C, \mathcal{N}) \not \vDash \phi$\\
$(C, \mathcal{N}) \vDash \phi \lor \phi'$ \> iff \> $(C, \mathcal{N}) \vDash \phi$ or $(C, \mathcal{N}) \vDash \phi'$\\
$(C, \mathcal{N}) \vDash E X \phi$ \> iff \> there is a path $\gamma$ such that $\gamma^0 = C$, and $(\gamma^1, \mathcal{N}) \vDash \phi$\\
$(C, \mathcal{N}) \vDash E G \phi$ \> iff \> there is a path $\gamma$ such that $\gamma^0 = C$, and for all $n \in \mathbb{N}$, $(\gamma^n, \mathcal{N}) \vDash \phi$\\
$(C, \mathcal{N}) \vDash E \phi U \phi'$ \> iff \> there is a path $\gamma$ such that $\gamma^0 = C$, for some $n \in \mathbb{N}$, $(\gamma^n,\mathcal{N}) \vDash \phi'$,\\
\> \> and for all $n'$, $0 \leq n' < n$ implies $(\gamma^{n'},\mathcal{N}) \vDash \phi$\\
$(C, \mathcal{N}) \vDash K_i \phi$ \> iff \> for all $C' \in \mathcal{N}$, $C \sim^{\mathcal{N}}_i C'$ implies $(C', \mathcal{N}) \vDash \phi$\\
\end{tabbing}
A formula $\phi \in \mathcal{L}$ is true in a network $\mathcal{N}$, or
$\mathcal{N} \vDash \phi$, iff for all configurations $C$, $(C, \mathcal{N})
\vDash \phi$.

\subsection{Quantum Teleportation Protocol}
\label{sub:QTP}

The goal of the Quantum Teleportation Protocol (QTP) is to transmit a qubit from
one party to another with the aid of an entangled pair of qubits and classical
resources. For reasons of space we refer to~\cite{qtp}
for a detailed presentation of QTP. The DMC specification of the protocol is
given in~\cite{dmbqc} as:
{\small
\begin{equation*}
\mathcal{N}_{QTP} = {\mathbf A}
: \{1, 2\}.[(\text{c!}s_2s_1).M^{0,0}_{12}E_{12}] \text{ } | \text{ }
{\mathbf B} : \{3\}.[X^{x_2}_3Z^{x_1}_3.(\text{c?}x_2x_1)] \text{ }
\|\text{ } E_{23}.
\end{equation*}}
The informal reading is as follows: Alice ${\mathbf A}$ and Bob ${\mathbf B}$
share the entangled pair $E_{23}$ of qubits 2 and 3, and Alice wants to transmit
the input qubit $1$. In the first step, she entangles $(E_{12})$ her qubits 1 and
2. Then she measures $(M^{0,0}_{12})$ both of them. Next, she sends via classical
communication $(\text{c!}s_2s_1)$ the measurement outcomes to Bob. Upon receipt
$(\text{c?}x_2x_1)$, Bob applies corrections $(X^{x_2}_3Z^{x_1}_3)$ to his qubit
$3$ depending on these measurements. The result is that Bob's qubit $3$ is
guaranteed to be in the same state as Alice's input qubit $1$.

\section{Formal Mapping}
\label{sec:mapping}

In this section we present a methodology for translating a protocol specified in
DMC into the corresponding IS. Formally, we define a mapping $f:DMC \rightarrow
IS$, such that $f$ preserves satisfaction of formulae in the specification
language $\mathcal{L}$. First, we describe the translation of the global quantum
state and classical states of agents. Then we cover the rules in DMC.
Finally, we show that $f$ is sound.

\subsection{Classical States of Agents and Global Quantum State}
\label{sub:States}

Given a network $\mathcal{N}$ we introduce an agent $i \in Ag$ for each agent
${\mathbf A}_i({\mathbf i}_i, {\mathbf o}_i) : Q_i.\mathcal{E}_i$ in
$\mathcal{N}$, as well as the Environment agent $E$. We take a local state $l_i
\in L_i$ of agent $i$
to be a tuple of vector variables $(\vec{x}, \vec{y}, \vec{s}, \vec{q}, pc)$
defined as follows:
\begin{itemize}
    \item Each classical input bit in ${\mathbf i}_i$ 
    is mapped to a variable $y \in l_i$ in the domain $\{0,1\}$.
    \item A bit received from an agent via the classical receive event
      c?$x$ in the event sequence $\mathcal{E}_i$ is mapped to a
      variable $x \in l_i$ in the domain $\{0,1, \bot\}$, where
      $\bot$ denotes the undefined value before communication.
    \item A variable $s \in l_i$, called \emph{signal}, represents
      the outcome of a measurement event $M^{\alpha}_{q}$ in the event
      sequence $\mathcal{E}_i$, where $q$ is the measured qubit and
      $\alpha$ is a measurement angle. A signal can attain values
      $\{0,1, \bot\}$, where $\bot$ denotes the undefined value of the
      signal before the agent executes the measurement.
    \item A variable $q \in l_i$ in the domain $\{0,1,2\}$ represents
      the ownership relation between agent $\mathbf{A}_i$ and qubit
      $q$ with the following meaning: if $\mathbf{A}_i$ is not in
      possession of $q$, i.e., $q \notin Q_i$, then we take $q =
      0$. If $\mathbf{A}_i$ owns the qubit $q$, i.e., $q \in Q_i$,
      then $q = 1$ or $q = 2$. The former value represents that
      $\mathbf{A}_i$ does not know the exact state of the qubit, the
      latter value represents that she knows it. We assume that the
      agent knows the state of the qubit once she measures it or
      prepares it in a specific state. This is motivated as there is
      classical information involved in both cases. However, the agent
      loses this knowledge when she sends the qubit to another agent,
      as it is no longer in her possession, or entangles it with
      another qubit. Note that correction commands preserve knowledge
      because they are deterministic actions that neither entangle
      nor separate qubits.
    \item 
$pc \in l_i$ is a counter for the events in the event
      sequence $\mathcal{E}_i$ executed by agent $\mathbf{A}_i$.\\
\end{itemize}

\textbf{Example 1.} Consider the specification of QTP in DMC as given in
Section~\ref{sub:QTP}. The local state of Alice is described by the tuple
$l_A=(s_1,s_2,q_1,q_2,q_3,pc)$, and similarly the local state of Bob is
$l_B=(x_1,x_2,q_1,q_2,q_3,pc)$. In the initial state Alice owns the input qubits
$q_1$ and $q_2$ in the entangled pair, while Bob owns the qubit $q_3$, and
neither of them knows anything about their qubits. Alice has not yet measured any
qubit nor has she sent anything to Bob. The program counters of both agents point
to the first event in their event sequences. All this is captured in variable
assignments $(\bot,\bot,1,1,0,1)$ for Alice and
$(\bot,\bot,0,0,1,1)$ for Bob.\\

A local state $l_E \in L_E$ of the Environment represents the quantum state
$\sigma$ of the network.
$l_E$ is a tuple of vector variables $(\vec{q}, \vec{q}', \vec{e}, gc)$ defined
as follows:
\begin{itemize}
    \item We divide the global quantum state at any given time into
      the smallest possible substates - individual qubits and/or
      systems of entangled qubits - such that these are in pure
      states, i.e., they can be represented as a vector in a Hilbert
      space. We generate the reachable quantum state space of the
      network using the small-step rules for patterns and enumerate
      all such encountered substates. Thus, every reachable substate
      has an associated name $qs_n$, $n \in \mathbb{N}$.
    \item For every qubit $q \in \mathcal{N}$ we introduce a variable $q
      \in l_E$. The domain of $q$ is the set of names of quantum
      states that $q$ may attain in any run of the protocol, together
      with the value $\bot$ indicating that the qubit is not in a pure
      state but entangled with other qubits.
    \item Similarly, for every system of entangled qubits we introduce a
      variable $e \in l_E$. The domain of $e$ is the set of names
      of quantum states that the system may attain, together with the
      value $\bot$ indicating that either the system is not in a pure
      state or its qubits are not entangled.
    \item Each variable $q$ and $e$ is assigned a name if only if
      they 
      are pure and 
      cannot be further separated. Otherwise, they are assigned
      the value $\bot$. The global state $\sigma$ is then
      the tensor product of these substates.
    \item In addition, we make use of an auxiliary variable $q'$ for
      each qubit $q \in \mathcal{N}$ recording the name of its initial
      state, and introduce the global counter $gc \in l_E$ that
      increases with every action in the network. This is used to
      track the global time and to enumerate the configurations in
      $\mathcal{C}_\mathcal{N}$ according to their occurrence in the
      path.\\
\end{itemize}

\textbf{Example 2.}
The global quantum state of QTP is represented in the local state of the
Environment $E$ as the tuple $l_E =
(q_1,q_2,q_3,q'_1,q'_2,q'_3,e_{23},e_{123},gc)$.
The initial state of the input qubit $q_1$ is $[a, b]^T$, for $a, b \in
\mathbb{C}$. We assume that it is not equal to states $[1,0]^T$ and $[0,1]^T$ of
the standard basis, nor to states $\frac{1}{2}[\sqrt{2},\sqrt{2}]^T$ and
$\frac{1}{2}[\sqrt{2},-\sqrt{2}]^T$ of the measurement basis. In these cases
there are fewer states, but the procedure is analogous. Table~\ref{tab:enum}
shows the enumeration of substates occurring in all possible runs of the network,
as Alice and Bob execute quantum commands according to QTP.
For instance, the initial state of the network is
$(qs_1,\bot,\bot,qs_1,\bot,\bot,qs_2,\bot,1)$. Note that only the input qubit
$q_1$ and the system of two entangled qubit $e_{23}$ have assigned named states.
This is because the individual qubits $q_2$ and $q_3$ are not in a pure state and
the system of all three qubit $e_{123}$ can be further separated. Indeed, the
whole quantum state can be expressed as the tensor product $[a, b]^T \otimes
\frac{1}{2}[1,1,1,-1]^T$, or by using names $qs_1 \otimes qs_2$.

\begin{table}
{\small
    \centering
    \begin{tabular}{|c|c|c|c|}
        \hline
        \textbf{Action} & \textbf{Qubit/Entangled System} & \textbf{State} & \textbf{Name} \\
        \hline
        \multirow{2}{*}{Initially} 
            & $q_1$
            & $[a, b]^T$
            & $qs_1$ \\
            \cline{2-4}
            & $e_{23}$
            & $\frac{1}{2}[1,1,1,-1]^T$
            & $qs_2$ \\
        \hline
        $E_{12}$ & $e_{123}$ & $\frac{1}{2}[a,a,a,-a,b,b,-b,b]^T$ & $qs_3$ \\
        \hline
        \multirow{4}{*}{$M^{0}_{1}$} 
            & \multirow{2}{*}{$q_1$}
                & $\frac{1}{2}[\sqrt{2},\sqrt{2}]^T$
                & $qs_4$ \\
                & & $\frac{1}{2}[\sqrt{2},-\sqrt{2}]^T$
                & $qs_5$ \\
            \cline{2-4}
            & \multirow{2}{*}{$e_{23}$}
                & $\frac{1}{2}[a+b,a+b,a-b,-a+b]^T$
                & $qs_6$ \\
                & & $\frac{1}{2}[a-b,a-b,a+b,-a-b]^T$
                & $qs_7$ \\
        \hline
        \multirow{6}{*}{$M^{0}_{2}$} 
            & \multirow{2}{*}{$q_2$}
                & $\frac{1}{2}[\sqrt{2},\sqrt{2}]^T$
                & $qs_4$ \\
                & & $\frac{1}{2}[\sqrt{2},-\sqrt{2}]^T$
                & $qs_5$ \\
            \cline{2-4}
            & \multirow{4}{*}{$q_3$}
                & $[a, b]^T$
                & $qs_1$ \\
                & & $[a, - b]^T$
                & $qs_8$ \\
                & & $[b, a]^T$
                & $qs_9$ \\
                & & $[-b, a]^T$
                & $qs_{10}$ \\
        \hline
        $X^{x_2}_3Z^{x_1}_3$ & $q_3$ & $[a, b]^T$ & $qs_1$ \\
        \hline
    \end{tabular}
    \caption{Enumeration of quantum substates in the evolution of QTP.}
    \label{tab:enum}
}
\end{table}

\subsection{Transition Rules}
\label{sub:transitions}

Events in the event sequence $\mathcal{E}_i$ of agent $\mathbf{A}_i$ are mapped
into actions in $Act_i$.
Actions are executed according to a protocol function $P_i$ and their effects are
described by evolution functions $t_i$ and $t_E$ depending on whether the
classical state of agent $\mathbf{A}_i$ changes, or the quantum state $\sigma$ of
the system changes, or both.
Before introducing the mapping for events, note that DMC is a probabilistic
calculus, whereas IS have a Boolean semantics. We deal with this issue by
allowing all admissible transitions, abstracting away from the probability
distribution. As a result,
we lose the ability to reason about the probability of reaching a state. However,
this is not an issue for us as we need to reason about non-probabilistic
properties only as the choice of the language $\mathcal{L}$
demonstrates.

Note also that the execution of a pattern $\mathcal{P}$ in DMC occurs in a single
transition step and depends on the big-step semantics of the pattern (see Rule
\ref{eq:r1}). However, we handle transitions at the level of individual commands
of $\mathcal{P}$, and so the execution depends on the small-step semantics of
patterns and may span across several time steps.
This leads to a finely grained state space. In the rest of this section we
present the actions, the protocols, and the evolution functions associated with
the classical and quantum communication and the quantum commands presented in
Section \ref{sub:DMC}.


\textbf{Classical rendez-vous}. Assume that agent $\mathbf{A}_i$ sends the value
of $y$ to agent $\mathbf{A}_j$ who stores it in $x$, specified in DMC as
$\Gamma_i, \mathbf{A}_i:Q_i.\text{c!}y$ and $\Gamma_j,
\mathbf{A}_j:Q_j.\text{c?}x$, and that this is the $v$th (resp.~$w$th) event in
$\mathcal{E}_i$ (resp.~$\mathcal{E}_j$). We translate this by considering the
actions $snd\_j\_y_0$ and $snd\_j\_y_1$ in the set $Act_i$ of actions for agent
$i$, and action $rcv\_i\_x$ in $Act_j$. The protocol functions are:
{\small
\begin{align}
    P_i(l_i) &= \{snd\_j\_y_0\} \text{, if } pc = v \land y = 0, \nonumber \\
    P_i(l_i) &= \{snd\_j\_y_1\} \text{, if } pc = v \land y = 1, \nonumber \\
    P_j(l_j) &= \{rcv\_i\_x\} \text{, if } pc = w. \nonumber 
\end{align}}
The configuration transition, described by Rule \ref{eq:r2}, is translated into
the following evolution functions for the agents $i$ and $j$:
{\small
\begin{align}
    t_i(l_i,Act_i,Act_j) &= pc \mapsto pc + 1 \text{, if } (Act_i = snd\_j\_y_0 \lor Act_i = snd\_j\_y_1) \land Act_j = rcv\_i\_x, \nonumber \\
    t_j(l_j,Act_i,Act_j) &= pc \mapsto pc + 1 \land x \mapsto 0 \text{, if } Act_i = snd\_j\_y_0 \land Act_j = rcv\_i\_x, \nonumber \\
    t_j(l_j,Act_i,Act_j) &= pc \mapsto pc + 1 \land x \mapsto 1 \text{, if } Act_i = snd\_j\_y_1 \land Act_j = rcv\_i\_x. \nonumber
\end{align}}
The rationale behind the above equations is that when agents perform paired
send/receive actions at the same time step, their program counters are
incremented, and variable $x$ of agent $\mathbf{A}_j$ is assigned the transmitted
value.

\textbf{Quantum communication}. Assume that agent $\mathbf{A}_i$ sends a qubit $q
\in Q_i$ to agent $\mathbf{A}_j$, described as $\Gamma_i,
\mathbf{A}_i:Q_{i}.\text{qc!}q$ and $\Gamma_j, \mathbf{A}_j:Q_{j}.\text{qc?}q$,
and that this is the $v$th (resp.~$w$th) event in $\mathcal{E}_i$
(resp.~$\mathcal{E}_j$). We introduce actions $qsnd\_j\_q$ and $qrcv\_i\_q$ in
$Act_i$ and $Act_j$ respectively. The protocol functions are:
{\small
\begin{align}
    P_i(l_i) &= \{qsnd\_j\_q\} \text{, if } pc = v, \nonumber \\
    P_j(l_j) &= \{qrcv\_i\_q\} \text{, if } pc = w. \nonumber
\end{align}}
Rule~\ref{eq:r3} defines the configuration transition in terms of sets of qubits
$Q_i$ and $Q_j$. When $\mathbf{A}_i$ sends the qubit $q$, it is removed from her
set, and when $\mathbf{A}_j$ receives $q$, it is added to her set. This is
translated into IS by the evolution functions:
{\small
\begin{align}
    t_i(l_i,Act_i,Act_j) &= pc \mapsto pc + 1 \land q \mapsto 0 \text{, if } Act_i = qsnd\_j\_q \land Act_j = qrcv\_i\_q, \nonumber \\
    t_j(l_j,Act_i,Act_j) &= pc \mapsto pc + 1 \land q \mapsto 1 \text{, if } Act_i = qsnd\_j\_q \land Act_j = qrcv\_i\_q. \nonumber
\end{align}}

This means that when both agents concurrently execute the respective quantum
communication events, their local program counters are incremented, and the
ownership of the qubit changes, i.e., $\mathbf{A}_i$ is no longer in possession
of $q$ while $\mathbf{A}_j$ owns it but does not know its state.

\textbf{Corrections}. The events $X_q^s$ and $Z_q^s$ differ only in their matrix
representations, so we describe them together. Assume that agent $\mathbf{A}_i$
executes the Pauli operator $X$ or the Pauli operator $Z$ on a qubit $q$ at step
$v$ of $\mathcal{E}_i$ if signal $s = 1$, otherwise she skips the event. This
scenario has the following DMC description: $\Gamma_i,\mathbf{A}_i:q \uplus
R_i.U^s_q$, with $U^s_q \in \{X^s_q, Z^s_q\}$.  We introduce actions $x\_q$ and
$z\_q$ in $Act_i$, and since the agent applies the event conditionally, we also
include the action $skip$. In the rest of the description we refer to both
actions $x\_q$ and $z\_q$ as $u\_q$. The protocol function is then given as:
{\small
\begin{align}
    P_i(l_i) &= \{skip\} &\text{if } pc = v \land s= 0; \nonumber \\
    P_i(l_i) &= \{u\_q\} &\text{if } pc = v \land s = 1. \nonumber
\end{align}}
For example, we have the following ground protocol function for Bob in QTP:
{\small
\begin{align*}
    P_B(l_B) &= \{skip\} \text{, if } pc = 3 \land x_1 = 0; &P_B(l_B) &= \{skip\} \text{, if } pc = 4 \land x_2 = 0;  \\
    P_B(l_B) &= \{z\_q_3\} \text{, if } pc = 3 \land x_1 = 1; &P_B(l_B) &= \{x\_q_3\} \text{, if } pc = 4 \land x_2 = 1.
\end{align*}}
The small-step semantics for corrections is defined as $\sigma, \Gamma_i
\overset{U_q^s} \longrightarrow U_r^{s_{\Gamma_i}} \sigma, \Gamma_i$. Assume that
the qubit $q$ is in system $e$, which again may be just $q$ or some entangled
system. The local state of the agent $\mathbf{A}_i$ changes only through the $pc$
increment.  We define the evolution functions as:
{\small
\begin{align}
    t_E(l_E,Act_i) &= gc \mapsto gc + 1 \land e \mapsto qs_y \text{, if } e = qs_x \land Act_i= u\_q; \nonumber \\
    t_i(l_i, Act_i) &= pc \mapsto pc +1 \text{, if } Act_i = u\_q \lor Act_i = skip; \nonumber
\end{align}}
where $qs_x$ (resp.~$qs_y$) is the name of the state before (resp.~after) the
execution. The ground evolution function of $E$ in QTP with respect to Bob's
corrections $X^{x_2}_3Z^{x_1}_3$ is given as the following equations
corresponding to measurement outcomes $x_1 x_2 \mapsto 10$, $x_1 x_2 \mapsto 01$,
and $x_1 x_2 \mapsto 11$ respectively. Note that in the last case Bob executes
both actions $z\_q_3$ and $x\_q_3$ sequentially, while in the first two cases he
executes only one of them and skips the other.
{\small
\begin{align*}
    t_E(l_E,Act_B) &= gc \mapsto gc + 1 \land q_3 \mapsto qs_1 \text{, if } q_3 = qs_8 \land Act_B= z\_q_3; \\
    t_E(l_E,Act_B) &= gc \mapsto gc + 1 \land q_3 \mapsto qs_1 \text{, if } q_3 = qs_9 \land Act_B= x\_q_3; \\
    t_E(l_E,Act_B) &= gc \mapsto gc + 1 \land q_3 \mapsto qs_9 \text{, if } q_3 = qs_{10} \land Act_B= z\_q_3.
\end{align*}}

\textbf{Entanglement}. Assume that agent $\mathbf{A}_i$ applies at step $v$ of
$\mathcal{E}_i$ the entanglement operator $E_{qr}$ on qubits $q$ and $r$. The DMC
definition of the agent in this case is $\Gamma_i,\mathbf{A}_i:q,r \uplus
R_i.E_{qr}$. Since this event is independent of signals, we add only one
corresponding action $ent\_q\_r$ to $Act_i$ and define the following protocol
function:  $P_i(l_i) = \{ent\_q\_r\} \text{, if } pc = v$. The small-step rule
for entanglement is given as $\sigma, \Gamma_i \overset{E_{qr}}\longrightarrow
{}^{C}Z_{qr}\sigma,\Gamma_i$, where $^{C}Z_{qr}$ is the controlled-$Z$ operator
realising the entanglement. Since we divide the global state $\sigma$ into its
smallest pure substates, we have two possible situations. In the first case $q
\in e'$ and $r \in e''$, where $e'$ and $e''$ are isolated qubits, distinct
entangled systems, or combination of both. The resulting entangled system $e$ is
the union of the two systems $e'$ and $e''$, and we define the evolution function
of the Environment $E$ as:
{\small
\begin{equation*}
    t_E(l_E,Act_i)  =  gc \mapsto gc + 1 \land e \mapsto qs_z \land e' \mapsto \bot \land e'' \mapsto
    \bot, \text{ if } e' = qc_x \land e'' = qc_y \land Act_i = ent\_q\_r;
\end{equation*}}
where $qs_x$, $qs_y$, and $qs_z$ are the names of the quantum states in which the
systems $e'$, $e''$ are during the execution of the event, and $e$ after the
execution. For instance, the ground evolution function in QTP for Alice's
entanglement $E_{12}$ is:
{\small
\begin{eqnarray*}
    t_E(l_E,Act_A) &  = & gc \mapsto gc + 1 \land e_{123} \mapsto qs_3 \land q_1 \mapsto \bot \land e_{23} \mapsto
    \bot,\\ 
 & & \ \ \ \ \ \ \ \ \ \ \ \ \ \ \ \  \text{if } q_1 = qc_1 \land e_{23} = qc_2 \land Act_A = ent\_q_1\_q_2. \nonumber
\end{eqnarray*}}
Note that there may be many possible combinations of various states for $e'$ and
$e''$, and we have to define the evolution function for all of them. In the
second case the qubits $q$ and $r$ are part of the same system $e$ and we simply
have the evolution function:
{\small
\begin{equation}
    t_E(l_E,Act_i) = gc \mapsto gc + 1 \land e \mapsto qs_y \text{, if } e = qs_x \land Act_i = ent\_q\_r; \nonumber
\end{equation}}
where $qs_x$ (resp.~$qs_y$) is the name of the state before (resp.~after) the
execution. In both cases the local state of agent $\mathbf{A}_i$ is updated as
follows:
{\small
\begin{equation}
    t_i(l_i,Act_i) = pc \mapsto pc +1 \land q \mapsto 1 \land r \mapsto 1 \text{, if } Act_i = ent\_q\_r. \nonumber
\end{equation}}
This equation states that the counter of $\mathbf{A}_i$ is incremented and the
agent loses any knowledge about the state of $q$ and $r$ she might have had,
since neither qubit is in a pure state anymore.

\textbf{Measurement}. This is a complex event modifying the quantum state of the
network as well as the local states of agents. Suppose that agent $\mathbf{A}_i$
in step $v$ of $\mathcal{E}_i$ measures her qubit $q$ in the
$\{\mathinner{|{+_{\alpha}}\rangle}, \mathinner{|{-_{\alpha}}\rangle}\}$ basis,
specified in DMC as $\Gamma_i,\mathbf{A}_i:q \uplus
R_i.{}^{t}[M_q^{\alpha}]^{s}$, where $s$ and $t$ are signals. A measurement is a
stochastic event and may also depend on signals $s$ and $t$. We express this
non-determinism by associating two actions to a given local state $l_i$ of agent
$i$. However, due to a possible dependency on signals $s$ and $t$, there are four
different sets of actions and protocol rules. We list them in
Table~\ref{tab:measurement_action_rules}, where $\emptyset$ means that the
measurement does not depend on a particular signal.


\begin{table}
{\small
    \centering
    \begin{tabular}{|c|c|c|}
        \hline
        \multirow{2}{*}{$\emptyset$ $\emptyset$}
                & Actions & $m\_q\_+_{\alpha}$, $m\_q\_-_{\alpha}$ \\
            \cline{2-3}
            & Protocol & $P_i(l_i(g)) = \{m\_q\_+_{\alpha}, m\_q\_-_{\alpha}\}$, if $pc = v$ \\
        \hline
        \multirow{3}{*}{$s$ $\emptyset$}
                & Actions & $m\_q\_s_0\_+_{\alpha}$,
                            $m\_q\_s_0\_-_{\alpha}$,
                            $m\_q\_s_1\_+_{\alpha}$,
                            $m\_q\_s_1\_-_{\alpha}$ \\
            \cline{2-3}
            & \multirow{2}{*}{Protocol} & $P_i(l_i(g)) = \{m\_q\_s_0\_+_{\alpha},
                m\_q\_s_0\_-_{\alpha}$\}, if  $pc = v \land s = 0$\\
            & & $P_i(l_i(g)) = \{m\_q\_s_1\_+_{\alpha},
                m\_q\_s_1\_-_{\alpha}\}$, if  $pc = v \land s = 1$ \\
        \hline
        \multirow{3}{*}{$\emptyset$ $t$}
                & Actions & $m\_q\_t_0\_+_{\alpha}$,
                            $m\_q\_t_0\_-_{\alpha}$,
                            $m\_q\_t_1\_+_{\alpha}$,
                            $m\_q\_t_1\_-_{\alpha}$ \\
            \cline{2-3}
            & \multirow{2}{*}{Protocol} & $P_i(l_i(g)) = \{m\_q\_t_0\_+_{\alpha},
                m\_q\_t_0\_-_{\alpha}$\}, if  $pc = v \land t = 0$\\
            & & $P_i(l_i(g)) = \{m\_q\_t_1\_+_{\alpha},
                m\_q\_t_1\_-_{\alpha}\}$, if  $pc = v \land t = 1$ \\
        \hline
        \multirow{6}{*}{$s$ $t$}
                & \multirow{2}{*}{Actions} & $m\_q\_s_0\_t_0\_+_{\alpha}$,
                                             $m\_q\_s_0\_t_0\_-_{\alpha}$,
                                             $m\_q\_s_0\_t_1\_+_{\alpha}$,
                                             $m\_q\_s_0\_t_1\_-_{\alpha}$, \\
                                        & & $m\_q\_s_1\_t_0\_+_{\alpha}$,
                                            $m\_q\_s_1\_t_0\_-_{\alpha}$,
                                            $m\_q\_s_1\_t_1\_+_{\alpha}$,
                                            $m\_q\_s_1\_t_1\_-_{\alpha}$ \\
            \cline{2-3}
            & \multirow{4}{*}{Protocol} & $P_i(l_i(g)) = \{m\_q\_s_0\_t_0\_+_{\alpha},
                                          m\_q\_s_0\_t_0\_-_{\alpha}$\}, if  $pc = v \land s = 0 \land t = 0$\\
                                      & & $P_i(l_i(g)) = \{m\_q\_s_0\_t_1\_+_{\alpha},
                                          m\_q\_s_0\_t_1\_-_{\alpha}$\}, if  $pc = v \land s = 0 \land t = 1$\\
                                      & & $P_i(l_i(g)) = \{m\_q\_s_1\_t_0\_+_{\alpha},
                                          m\_q\_s_1\_t_0\_-_{\alpha}$\}, if  $pc = v \land s = 1 \land t = 0$\\
                                      & & $P_i(l_i(g)) = \{m\_q\_s_1\_t_1\_+_{\alpha},
                                          m\_q\_s_1\_t_1\_-_{\alpha}$\}, if  $pc = v \land s = 1 \land t = 1$\\
        \hline
    \end{tabular}
    \caption{Actions and protocol rules for various degree of dependency
        of measurements.}
    \label{tab:measurement_action_rules}
}
\end{table}

The following two transitions are defined in the small-step semantics for the
measurement event: $\sigma, \Gamma_i
\xrightarrow{^{t}[M^{\alpha}_{q}]^{r}} _{\lambda}
\mathinner{\langle{+_{\alpha_{\Gamma_i}}}|}_{q}
\sigma, \Gamma_i[0/q]$ and $\sigma,\Gamma_i
\xrightarrow{^{t}[M^{\alpha}_{r}]^{s_1}} _{\lambda}
\mathinner{\langle{-_{\alpha_{\Gamma_i}}}|}_{q}
\sigma, \Gamma_i[1/q]$. This is the source of non-determinism in the transition
system, but we do not consider the
probability $\lambda$ as long as it is non-zero.

There are again four types of evolution functions. They differ in the computation
of quantum states, and since we give only the general rules, here we describe the
evolution functions only for the independent measurement, i.e., when $s = t =
\emptyset$. As far as the translation rules are concerned, the other three types
differ only in the names of the actions and the actual names of quantum states.
We can translate them analogously.

We now consider two cases where both measurement outcome are possible. First, for
the measurement of an isolated qubit $q$ we define the evolution function of the
Environment $E$ as follows:
{\small
\begin{align}
    t_E(l_E,Act_i) &= gc \mapsto gc + 1 \land q \mapsto qs_{+_\alpha} \text{, if } q = qc_x\land Act_i = m\_q\_+_{\alpha}; \nonumber \\
    t_E(l_E,Act_i) &= gc \mapsto gc + 1 \land q \mapsto qs_{-_\alpha} \text{, if } q = qc_x \land Act_i = m\_q\_-_{\alpha}; \nonumber
\end{align}}
where $qs_{+_\alpha}$ and $qs_{-_\alpha}$ are names of the
$\{\mathinner{|{+_{\alpha}}\rangle},\mathinner{|{-_{\alpha}}\rangle}\}$
measurement basis. If the qubit $q$ is part of an entangled system $e$, then the
system becomes separated on measurement. The measured qubit $q$ collapses and the
rest of qubits form a new system $e'$. We define the evolution function as
follows:
{\small
\begin{align}
    t_E(l_E,Act_i) &= gc \mapsto gc + 1 \land q \mapsto qs_{+_\alpha} \land e \mapsto 
    \bot \land e' \mapsto qs_y \text{, if } e = qc_x \land Act_i = m\_q\_+_{\alpha}, \nonumber \\
    t_E(l_E,Act_i) &= gc \mapsto gc + 1 \land q \mapsto qs_{-_\alpha} \land e \mapsto 
    \bot \land e' \mapsto qs_z \text{, if } e = qc_x \land Act_i = m\_q\_-_{\alpha}. \nonumber
\end{align}}
In both cases the measurement outcome is assigned to a signal variable $s'$ of
agent $\mathbf{A}_i$ and her evolution function is given by:
{\small
\begin{align}
    t_i(l_i,Act_i) &= pc \mapsto pc + 1 \land s' \mapsto 0 \land q \mapsto 2 \text{, if } Act_i = m\_q\_+_{\alpha}; \nonumber \\
    t_i(l_i,Act_i) &= pc \mapsto pc + 1 \land s' \mapsto 1\land q \mapsto 2 \text{, if } Act_i = m\_q\_-_{\alpha}. \nonumber
\end{align}}
For instance, consider the first measurement that Alice performs in QTP. All
three qubits are entangled together and therefore measuring the input qubit $q_1$
causes separation of the system $e_{123}$ into two parts, $q_1$ and $e_{23}$, and
has two possible outcomes. Both have probability $\lambda = 0.5$, but we do not
take this into account since all we require is that they are non-zero, therefore
the respective transitions are admissible. We have the following ground evolution
functions for the Environment and Alice:
{\small
\begin{align*}
    t_E(l_E,Act_A) &= gc \mapsto gc + 1 \land q_1 \mapsto qs_4 \land e_{123} \mapsto 
    \bot \land e_{23} \mapsto qs_6 \text{, if } e_{123} = qc_3 \land Act_A = m\_q_1\_+_{\alpha}; \\
    t_E(l_E,Act_A) &= gc \mapsto gc + 1 \land q_1 \mapsto qs_5 \land e_{123} \mapsto 
    \bot \land e_{23} \mapsto qs_7 \text{, if } e_{123} = qc_3 \land Act_A = m\_q_1\_-_{\alpha};  \\
    t_A(l_A,Act_A) &= pc \mapsto pc + 1 \land s_1 \mapsto 0 \land q_1 \mapsto 2 \text{, if } Act_A = m\_q_1\_+_{\alpha}; \\
    t_A(l_A,Act_A) &= pc \mapsto pc + 1 \land s_1 \mapsto 1\land q_1 \mapsto 2 \text{, if } Act_A = m\_q_1\_-_{\alpha}.
\end{align*}}
In the case that the measured qubit is in a state that coincides with one of the
states of the measurement basis, there is only one possible outcome and we need
to prevent reaching an impossible state.
The translation of the transition function in case that
a measurement outcome has zero probability
requires modification of the evolution functions. We only show the case when
measuring $\mathinner{|{-_{\alpha}}\rangle}$ is impossible. The evolution
function of the Environment is given as:
{\small
\begin{align*}
    t_E(l_E,Act_i) &= gc \mapsto gc + 1 \land q \mapsto qs_{+_\alpha} \text{, if }
    q = qc_{+_\alpha} \land (Act_i = m\_q\_+_{\alpha} \lor Act_i = m\_q\_-_{\alpha}).
\end{align*}}
The Environment ``signals'' that the measurement of $q$ in a quantum state $qs_x$
has only one possible outcome. We introduce action $env_{x}$ in $Act_E$ and
define the following protocol function: $P_E(l_E) = \{env_{x}\}$, if $q  = qs_x$.
The evolution function of agent $i$ is then defined as:
{\small
\begin{align*}
    t_i(l_i,Act_i,Act_E) &= pc \mapsto pc + 1 \land s' \mapsto 0 \land q \mapsto 2, 
     \text{ if } Act_i = m\_q\_+_{\alpha} \lor (Act_i = m\_q\_-_{\alpha} \land Act_E =  env_x), \\
     t_i(l_i,Act_i,Act_E) &= pc \mapsto pc + 1 \land s' \mapsto 1 \land q \mapsto
    2 \text{, if } Act_i = m\_q\_-_{\alpha} \land Act_E \neq  env_x.
\end{align*}
}


\subsection{Correctness Proof}

We now show that the translation $f$ defined in the previous section is sound,
that is, $f$ preserves the truth conditions of formulae defined in the language
$\mathcal{L}$ introduced in Section~\ref{sec:IS} from the set of atomic
propositions $AP = \{ x = y, q_i = q_j \}$.  In \cite{ckfqcr} the truth
conditions for
the atoms in $AP$ in a configuration $C$ of a network $\mathcal{N}$ are given as
follows:
\begin{eqnarray*}
  (C, \mathcal{N}) \models x = y & \text{iff} & \text{there
    are agents } i, j \text{ such that } \Gamma_i(x) =
  \Gamma_i(y);\\ (C, \mathcal{N}) \models q_i = q_j &
  \text{iff} & \text{the global quantum state } \sigma \text{ is such that } \sigma = q_i
  = q_j.
\end{eqnarray*}

Intuitively, $x = y$ holds iff the bits denoted by $x$ and $y$ are equal.  Also,
$q_i = q_j$ holds iff the qubits denoted by $q_i$ and $q_j$ are equal.  We can
prove the following result on the translation $f$ and the language $\mathcal{L}$.
\begin{theorem} \label{equiv}
  For every formula $\phi \in \mathcal{L}$,
\begin{eqnarray*}
  (C, \mathcal{N}) \models \phi & \text{iff} &  (f(\mathcal{N}), f(C))  \models \phi
\end{eqnarray*}
\end{theorem}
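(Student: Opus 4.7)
The plan is to proceed by structural induction on the formula $\phi$, supported by two auxiliary correspondences: an \emph{atomic preservation} lemma asserting that every $p \in AP$ has the same truth value at $C$ and at $f(C)$, and a \emph{step correspondence} lemma asserting that $f$ induces a bisimulation between the DMC transition relation $\xrightarrow{\mathcal{N}}$ and the IS transition relation $\xrightarrow{f(\mathcal{N})}$, together with an equality of epistemic relations $C \sim^{\mathcal{N}}_i C' \Leftrightarrow f(C) \sim^{f(\mathcal{N})}_i f(C')$. For the base case, I would fix $C = \sigma, |_i\,\Gamma_i, \mathbf{A}_i$ and check each atom: for $x = y$ the agreement is immediate because each classical variable $\Gamma_i(x)$ is mapped directly to the corresponding component of $l_i(f(C))$ by the construction of Section~\ref{sub:States}; for $q_i = q_j$ the agreement follows from the fact that the enumeration of pure substates $qs_n$ assigns the \emph{same} name to two qubits only when they occupy identical pure states in $\sigma$, so the IS evaluation relation $V$ can be set up to reflect $\sigma = q_i = q_j$ exactly.

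The Boolean cases $\neg \phi$ and $\phi \lor \phi'$ follow from the induction hypothesis. The temporal cases $EX\phi$, $EG\phi$, and $E\phi U\phi'$ all reduce to lifting the step correspondence to paths: if $\gamma = C_0 \xrightarrow{\mathcal{N}} C_1 \xrightarrow{\mathcal{N}} \cdots$ is a path in $\mathcal{N}$, then $f(\gamma) = f(C_0) \xrightarrow{f(\mathcal{N})} f(C_1) \xrightarrow{f(\mathcal{N})} \cdots$ is a path in $f(\mathcal{N})$, and conversely any IS path from $f(C)$ comes from a DMC path from $C$. The step correspondence itself is established by case analysis on the four small-step rules~(\ref{eq:r1})--(\ref{eq:r4}): for each of local pattern execution, classical rendez-vous, quantum rendez-vous, and the metarule, one checks that Section~\ref{sub:transitions} provides protocol functions and evolution functions whose joint firing is enabled in $f(C)$ precisely when the DMC rule fires in $C$, and whose effect on $l_i$ and $l_E$ reproduces the updates to $\Gamma_i$, $Q_i$, and $\sigma$. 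The probabilistic labels $\lambda$ on DMC transitions are absorbed into non-determinism: every nonzero-probability successor yields an enabled IS action, which is sufficient because $\mathcal{L}$ contains no probabilistic operator.

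For the epistemic case $K_i \phi$, I would combine the induction hypothesis with the equality of epistemic relations. On the DMC side, $C \sim^{\mathcal{N}}_i C'$ requires $\Gamma_i = \Gamma'_i$ and $\mathcal{E}_i = \mathcal{E}'_i$. On the IS side, $f(C) \sim^{f(\mathcal{N})}_i f(C')$ requires $l_i(f(C)) = l_i(f(C'))$. The translation encodes $\Gamma_i$ injectively into the classical components $(\vec{x}, \vec{y}, \vec{s}, \vec{q})$ of $l_i$, and the residual event sequence $\mathcal{E}_i$ (which is monotonically consumed along any execution starting from a fixed initial $\mathcal{E}_i^0$) is faithfully tracked by the program counter $pc$. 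The two relations therefore coincide, and the epistemic clause is preserved.

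The hard part will be the step correspondence lemma, and within it the measurement case. One has to verify that the substate enumeration $qs_n$ is expressive enough to record every pure substate reachable along arbitrary interleavings of entanglement, measurement, and correction, and that the special evolution rule introduced for zero-probability outcomes (the auxiliary $env_x$ action of $E$) rules out exactly the DMC transitions with $\lambda = 0$ without spuriously blocking any admissible branch. A further subtle point is reconciling the synchronous joint-action semantics of IS with DMC's asynchronous metarule (\ref{eq:r4}): this requires implicitly assigning each idle agent a neutral enabled action in $P_j$ whose effect under $t_j$ is the identity, so that a DMC local step becomes an IS joint step in which only the active agents (and $E$, where applicable) make a non-trivial move. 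Once these details are nailed down, the induction goes through uniformly.
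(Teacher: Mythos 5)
Your proposal follows essentially the same route as the paper's own (sketch) proof: induction on the structure of $\phi$, with the base case discharged by the construction of $f(C)$ in Section~\ref{sub:States}, the temporal cases by a step correspondence between $\xrightarrow{\mathcal{N}}$ and $\xrightarrow{f(\mathcal{N})}$ derived from the event-to-action translation of Section~\ref{sub:transitions}, and the epistemic case by the coincidence of $\sim^{\mathcal{N}}_i$ with $\sim^{f(\mathcal{N})}_i$ via the encoding of $\Gamma_i$ and $\mathcal{E}_i$ in $l_i$. The only difference is one of detail: the paper stops at a sketch, whereas you explicitly isolate the bisimulation and atomic-preservation lemmas and flag the genuinely delicate points (the $env_x$ mechanism for zero-probability outcomes, null actions for idle agents, and $pc$ tracking the residual event sequence) that the paper's argument takes for granted.
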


\begin{proof}
  The proof is by induction on the length of $\phi$. For reasons of
  space, we only provide a sketch of the proof.  If $\phi$
  is an atomic formula, then $\phi$ is of the form $\mathfrak{a} =
  \mathfrak{b}$, where $\mathfrak{a}$ and $\mathfrak{b}$ are both
  either bits or qubits.  By the definition of $f(C)$ in
  Section~\ref{sub:States} we can easily check that $(C, \mathcal{N})
  \models \mathfrak{a} = \mathfrak{b}$ iff $(f(\mathcal{N}),f(C))
  \models \mathfrak{a} = \mathfrak{b}$. Thus, the base case holds.
The inductive case for propositional connectives $\neg$ and $\lor$ is
straightforward.

 If $\phi = E X \psi$, then by the translation of events in the event
 sequence $\mathcal{E}$ into actions in $Act$ defined in
 Section~\ref{sub:transitions}, we can see that two configurations $C,
 C' \in \mathcal{N}$ are in the temporal relation induced by
 $\mathcal{E}$, or $C \xrightarrow{\mathcal{N}} C'$, iff their
 translations $f(C), f(C') \in f(\mathcal{N})$ are in the temporal
 relation induced by $Act$, or $f(C) \xrightarrow{f(\mathcal{N})}
 f(C')$. The result then follows by the induction hypothesis.
The inductive case for the other temporal operators is similar.

If $\phi = K_i \psi$, then by the definition of the local state $l_i$ of an agent
$i$ in Section~\ref{sub:States}, we have that $l_i(f(C)) = l'_i(f(C'))$ iff
$\Gamma_i = \Gamma'_i$ and $\mathcal{E}_i = \mathcal{E}'_i$, that is, $C
\sim^{\mathcal{N}}_i C'$ iff $f(C) \sim^{f(\mathcal{N})}_i f(C')$. Also in this
case the result follows by the induction hypothesis.
This completes the sketch. 
\end{proof}

Theorem~\ref{equiv} allows us to check whether a specification $\phi \in
\mathcal{L}$ is satisfied in a network $\mathcal{N}$ by verifying $\phi$ in the
corresponding interpreted system $f(\mathcal{N})$.

\section{Implementation and Evaluation}
\label{sec:implementation}

In this section we present an implementation of the formal map above. {\sc
dmc2ispl}\footnote{ The source code is available from
\url{http://www.doc.ic.ac.uk/~pg809/dmc2ispl.tar.gz}} is a source-to-source
compiler, written in C++ and using GNU Octave libraries for matrix operations. {\sc
dmc2ispl} translates a protocol specified in a machine-readable DMC input format
into an ISPL program. The code generated is then run by MCMAS, which in turn
reports on the specification requirements of the protocol.

We modified DMC, so it can be read by the compiler. The adaptation closely
follows the syntax of the original DMC, but also reflects some features of ISPL.
A DMC file consists of five modules: a set of \emph{agents}, a set of
\emph{qubits}, whose initial state is explicitly declared, a set of \emph{groups}
of agents that are used in formulae involving group modalities, a set of
\emph{formulae} to be verified, and a set of \emph{macros} that allow agents to
perform complex quantum operations in a single time step. The declaration of an
agent consists of a set of input qubits, a set of {\em a priori} known qubits, a
set of classical inputs, and a set of events the agent executes. For
illustration, the DMC code snippet for QTP can be found in
Listing~\ref{list:teleport}.

\lstinputlisting[caption={QTP.dmc},
    label={list:teleport}]{teleport.dmc}
    
{\sc dmc2ispl} has the architecture of a standard compiler. It consists of the
three following components: a module for parsing and validating the DMC input
file, a module for generating the reachable quantum state space, and a module for
generating the ISPL output file. Essentially, since MCMAS does not support matrix
arithmetic, the compiler is responsible for computation of the reachable quantum
state space, enumeration of encountered quantum states, and generation of the
evolution function of the global quantum system. Quantum states of a $n$-qubit
system are represented as $2^n \times 1$ complex matrices and unitary operators
and measurement projections as $2^n \times 2^n$ sparse complex matrices.
After the elimination of the global phase, whenever two identical state matrices
are encountered during the evolution of the quantum state of the $n$-qubit
system, they have assigned the same name.
MCMAS then works with these enumerations.

We used the compiler to verify QTP, as well as the Quantum Key Distribution
(QKD)~\cite{qcbobt}, and the Superdense Coding (SDC)~\cite{sdc} protocol against
the properties from the reference papers~\cite{ckfqcr, raqk}.
Table~\ref{tab:properties} summarises these properties. We discuss QTP in more
detail.

The figure~\ref{fig:teleport} gives a graphical representation of the possible
configurations in the QTP network. Note that configurations are parametrised by
measurement outcomes and the quantum input $\mathinner{|{\psi}\rangle}$.
The first formula in QTP section of Table~\ref{tab:properties} states that the
$\mathcal{N}_{TP}$ network is correct, since the state of Bob's qubit $q_3$ will
eventually be equal to the initial state of Alice's qubit $q_1$.
The second formula states that neither agent knows the actual quantum state of
the qubit $q_3$ at any point of the computation. The third formula states that
Bob eventually knows that the state of his qubit $q_3$ is equal to the initial
state of qubit $q_1$. The last formula states that Alice never knows this fact.

Interestingly, while~\cite{raqk} states that all four formulae are true in the
model, {\sc mcmas} evaluated the last formula to false. The reason is that even
though Alice cannot distinguish configuration
$C^{00}_3(\mathinner{|{\psi}\rangle})$ from
$C^{00}_4(\mathinner{|{\psi}\rangle})$, the atom $q_3 = init(q_1)$ holds in both
configurations as Bob does not apply any correction for measurement outcomes $s_1
s_2 \mapsto 00$, and so the quantum state of the system is invariant along this
path. This shows the importance of an automated algorithmic approach to
verification as opposed to a hand-made inspection.

\begin{figure}
    \centering
    \includegraphics{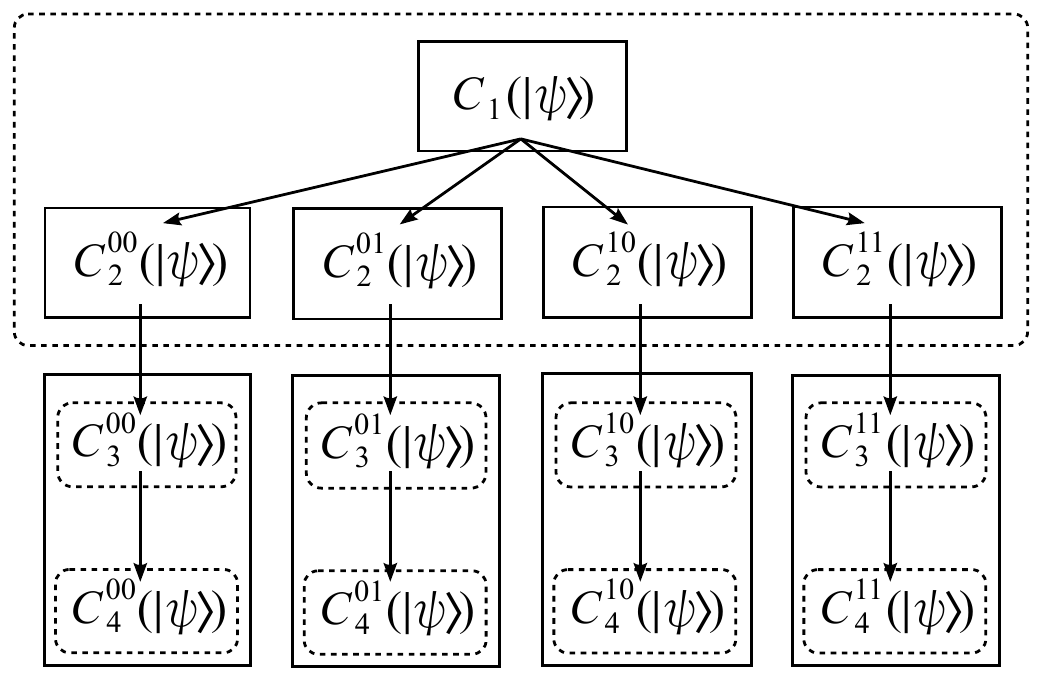}
    \caption{The epistemic accessibility relations of Alice and
        Bob in the QTP network.}
    \label{fig:teleport}
\end{figure}

\begin{table}
{\small
    \centering
    \begin{tabular}{|c|c|c|}
        \hline
        \textbf{Protocol} & \textbf{Formula} & \textbf{Reading} \\
        \hline
        \multirow{4}{*}{QTP} 
            & $AF (q_3 = init(q_1))$
            & $q_3$ eventually equals to initial $q_1$ \\
            \cline{2-3}
            & $\lnot EF K_{\mathbf A}(q_3 = \mathinner{|{\psi}\rangle}) \land \lnot EF K_{\mathbf B}(q_3 = \mathinner{|{\psi}\rangle})$
            & neither ${\mathbf A}$ nor ${\mathbf B}$ ever knows state of $q_3$ \\
            \cline{2-3}
            & $AF K_{\mathbf B}(q_3 = init(q_1))$
            & ${\mathbf B}$ eventually knows $q_1$ was teleported \\
            \cline{2-3}
            & $\lnot EF K_{\mathbf A}(q_3 = init(q_1))$
            & ${\mathbf A}$ never knows $q_1$ was teleported \\
        \hline
        \multirow{2}{*}{QKD} &  $\alpha_A = \alpha_B \rightarrow AF (K_{\mathbf{A}}(s_1=s_2)\land K_{\mathbf{B}}(s_1=s_2))$ 
                             & success if ${\mathbf A}$ \& ${\mathbf B}$ used the same basis \\
            \cline{2-3}
            & $\alpha_A \neq \alpha_B \rightarrow \lnot EF (K_{\mathbf{A}}(s_1=s_2)\lor K_{\mathbf{B}}(s_1=s_2))$ 
            & failure if ${\mathbf A}$ \& ${\mathbf B}$ used different bases \\
        \hline
        \multirow{3}{*}{SDC} &  $AF (s_1 = y_1 \land s_2 = y_2)$ & ${\mathbf B}$ eventually receives the inputs of ${\mathbf A}$ \\
            \cline{2-3}
            & $AF K_{\mathbf B}(s_1 = y_1 \land s_2 = y_2)$ & ${\mathbf B}$ eventually knows the inputs \\
            \cline{2-3}
            & $\lnot EF K_{\mathbf A}K_{\mathbf B} (s_1 = y_1 \land s_2 = y_2)$ & ${\mathbf A}$ never knows the fact above \\
        \hline
    \end{tabular}
    \caption{Verified properties of QKD and SDC protocols.}
    \label{tab:properties}
}
\end{table}

\begin{table}
{\small
    \centering
    \begin{tabular}{|c|c|c|c|c|c|c|}
        \hline
        \multirow{2}{*}{\textbf{Protocol}} &
        \multicolumn{2}{c|}{\textbf{Reachable States}} &
        \multicolumn{2}{c|}{\textbf{Memory (kB)}} &
        \multicolumn{2}{c|}{\textbf{Time (s)}} \\
        \cline{2-7}
        & \texttt{{\sc dmc2ispl}} & \texttt{{\sc mcmas}} & \texttt{{\sc dmc2ispl}} & \texttt{{\sc mcmas}} &
           \texttt{{\sc dmc2ispl}} & \texttt{{\sc mcmas}} \\
        \hline
        QTP & 40   & 108  & 7184 & 6068 & 0.015 & 0.066 \\
        QKD & 53   & 348  & 7240 & 6119 & 0.016 & 0.014 \\
        SDC & 4239 & 2192 & 8132 & 6279 & 0.112 & 0.407 \\
        \hline
    \end{tabular}
    \caption{Verification results for QTP, QKD and SDC protocols.}
    \label{tab:protocol_performance}
}
\end{table}

We conclude with some performance considerations. The tests were carried out on a
32-bit Fedora 12 Linux machine with a 2.26GHz Intel Core2 Duo processor and
2.9GiB RAM as follows: first, we translated the DMC specification into the
corresponding ISPL code using the compiler, then we analysed the resulting code
using {\sc mcmas}. Table~\ref{tab:protocol_performance} reports the results for
the three protocols. It can be seen that all protocols were verified very
quickly. This is due to their small state space and the limited number of
entangled qubits involved.

However, the amount of required resources grows exponentially for a constant
increase in the number of entangled qubits. Additionally, measuring a quantum
system using many different measurement angles results in many unique quantum
states, which in turn requires a large number of enumeration values and an
extensive evolution function. This affects the verification of a quantum protocol
by MCMAS. We analysed several experimental protocols to test the limits of the
tool. The results showed that protocols with up to $10^7$ reachable
classical states
and 20 entangled qubits can be realistically verified.

\section{Conclusion}
\label{sec:conclusion}

In this paper we presented a methodology for the automated verification of
quantum distributed systems via model checking. We defined a translation from DMC
to IS, so that {\sc mcmas} can be used to verify protocols specified in DMC.
Even though the translation does not take into account stochastic properties of
quantum protocols, in the sense that we abstract away from the underlying
probability distribution, many useful non-probabilistic properties can still be
verified as shown in reference papers~\cite{ckfqcr, raqk}.
We implemented the methodology in a source-to-source compiler and adapted the DMC
formalism to be used as an input language for the compiler. Several quantum
protocols were translated and their temporal epistemic properties were
successfully checked with {\sc mcmas}.

Given the universality of the underlying Measurement
Calculus~\cite{tmc}, the expressive power of DMC in terms of available quantum
operations is complete. However, DMC does not support any control flow statement
for the classical part of protocols.  This is one of the two major limitations of
the technique, although it can be solved by a suitable extension of the language.
Another limitation results from the state space explosion and cannot be easily
overcome since the quantum simulator requires exponential time and space on a
classical computer.

\bibliographystyle{eptcs}
\bibliography{qapl12}



\end{document}